\documentclass{article}
\usepackage{amssymb}
\usepackage{python}

\usepackage[english]{babel}
\usepackage[utf8x]{inputenc}
\usepackage[T1]{fontenc}
\usepackage{xcolor}  

\usepackage[title]{appendix}
\usepackage{framed}

\usepackage{color}
\usepackage{algorithm}
\usepackage{algpseudocode}
\usepackage{float}
\usepackage{enumerate}
\usepackage{babel}

\usepackage{ulem}

\usepackage{fontawesome}


\definecolor{lightgray}{rgb}{0.9, 0.9, 0.9}

\usepackage{amsmath}
\usepackage{amssymb}
\usepackage{amsthm}
\usepackage{proof}
\usepackage{graphicx}
\usepackage{subcaption}
\usepackage[colorinlistoftodos]{todonotes}

\newtheorem{dfn}{Definition}

\newtheorem{rem}{Remark}

\newtheorem{thm}{Theorem}

\begin{document}
\title{Quantum advantage for
combinatorial optimization problems, Simplified}   
\author{Mario Szegedy, Rutgers University, CS department} 

\maketitle

\begin{abstract}
We observe that fault-tolerant
quantum computers have an optimal advantage over classical computers in
approximating solutions to many NP optimization problems. This observation however
gives nothing in practice.
\end{abstract}

\section{Exposition}

Niklas Pirnay, Vincent Ulitzsch, Frederik Wilde,
Jens Eisert and Jean-Pierre Seifert~\cite{s2022} in their Theorems 3.5 and 4.3 prove, that:

\begin{thm}\label{Pirnayatal}
Assuming the hardness of inverting the RSA function, there exists 
a subset, called FC-RSA, of instances for the formula coloring problem (FC)
such that:
\begin{enumerate}
\item However large (but fixed) $\alpha\ge 1$ and $0\le  \beta < 1/2$ are, there is no classical probabilistic polynomial-time algorithm
that on every input instance $I\in$ FC-RSA could find a valid coloring $P=P(I)$ that approximates
$opt_{\rm FC}(I)$ (the optimal solution for $I$) so that $|P|\le opt_{\rm FC}(I)^{\alpha} |I|^{\beta}$.
\item There exists 
some $\alpha\ge 1$ and
a polynomial-time quantum algorithm that, on all instances $I\in$ FC-RSA
finds a valid coloring P such that
$|P| < opt_{\rm FC}(I)^{\alpha}$.
\end{enumerate}
\end{thm}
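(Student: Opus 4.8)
The plan is to establish both parts through a single gap-preserving reduction from the RSA inversion problem to formula coloring. The design principle, familiar from cryptographic quantum-advantage separations, is to manufacture FC instances whose cheap colorings secretly encode an RSA preimage: a quantum machine recovers that preimage via Shor's algorithm and reads off a near-optimal coloring, whereas any classical algorithm that produced a sufficiently cheap coloring could be turned into an RSA inverter. Concretely, from an RSA instance $(N,e,y)$ with $y\equiv x^{e}\pmod N$ I would build, in time polynomial in $|I|=\mathrm{poly}(\log N)$, an FC instance $I=I(N,e,y)$ with a gadget enjoying three properties: (a) every $x'$ with $x'^{e}\equiv y\pmod N$ induces, through an efficiently computable and efficiently invertible map, a valid coloring of size $opt_{\rm FC}(I)$, which is only \emph{polylogarithmic} in $|I|$ (the preimage being recorded in the color classes over the $|I|$ elements rather than in the number of colors); (b) a \emph{trivial} valid coloring using $\Theta(|I|^{1/2})$ colors is always available, with no knowledge of any preimage; and (c) no valid coloring with fewer than $\Theta(|I|^{1/2})$ colors exists other than those arising in~(a).

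Granting this construction, part~2 is immediate. On input $I(N,e,y)$ the quantum algorithm reads off $(N,e,y)$, factors $N$ by Shor's algorithm, computes $d\equiv e^{-1}\pmod{\varphi(N)}$ and $x\equiv y^{d}\pmod N$, and maps $x$ through~(a) to a valid coloring of size $opt_{\rm FC}(I)$. Since Shor runs in time polynomial in $\log N\le|I|$, the whole procedure is polynomial in $|I|$, and taking any fixed $\alpha>1$ gives $|P|=opt_{\rm FC}(I)<opt_{\rm FC}(I)^{\alpha}$.

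For part~1, fix $\alpha\ge 1$ and $\beta<1/2$ and suppose, for contradiction, that a classical probabilistic polynomial-time algorithm $A$ returns on every $I\in$ FC-RSA a valid coloring with $|P|\le opt_{\rm FC}(I)^{\alpha}|I|^{\beta}$. Because $opt_{\rm FC}(I)$ is polylogarithmic, the bound reads $|P|\le(\mathrm{polylog}\,|I|)^{\alpha}\,|I|^{\beta}=o(|I|^{1/2})$ for every fixed $\alpha$ and every fixed $\beta<1/2$; it is precisely the sub-polynomial size of the optimum that allows a single family FC-RSA to defeat every admissible pair $(\alpha,\beta)$ at once. Hence on all large instances $A$ must output a coloring using fewer than $\Theta(|I|^{1/2})$ colors, which by~(c) is of type~(a); inverting the map of~(a) recovers a preimage $x'$ of $y$. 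Composing $A$ with the reduction thus inverts RSA in classical probabilistic polynomial time on the induced distribution of instances, contradicting the hardness assumption. The same counting shows the threshold is sharp: at $\beta=1/2$ the trivial coloring of~(b) already satisfies $|P|=\Theta(|I|^{1/2})\le opt_{\rm FC}(I)^{\alpha}|I|^{1/2}$, so no preimage need be exposed and the argument collapses.

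The main obstacle is entirely in the construction of~(a)--(c): building one gadget family in which the optimum is polylogarithmic, a secret-free coloring sits exactly at the $|I|^{1/2}$ barrier, and the range of color counts strictly below that barrier is occupied \emph{only} by preimage-encoding colorings, all while keeping the preimage$\leftrightarrow$coloring correspondence efficiently computable in both directions and the reduction compatible with the distribution of RSA instances required to invoke hardness. Everything else --- the quantum use of Shor and the asymptotic comparison of $(\mathrm{polylog}\,|I|)^{\alpha}|I|^{\beta}$ against $|I|^{1/2}$ --- is routine once this gap gadget is in hand.
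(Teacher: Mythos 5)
First, a point of orientation: the paper does not prove Theorem~\ref{Pirnayatal} at all --- it quotes it as Theorems 3.5 and 4.3 of Pirnay et al.~\cite{s2022}. So there is no in-paper proof to compare your attempt against. The closest object in the paper is its own Theorem~2 (the MAX-3SAT analogue), whose proof is deliberately gadget-free: it composes an off-the-shelf NP-completeness reduction from the decision version of factoring to 3SAT with H{\aa}stad's gap-producing reduction~\cite{h2001}, uses the facts that Karp reductions come with polynomial-time forward witness maps and are polynomial-time invertible on their images, and then runs Shor on the decoded factoring instance. The whole thesis of the paper is that nothing bespoke needs to be built once one is willing to quote existing inapproximability machinery; your proposal goes in exactly the opposite direction, attempting to reconstruct the bespoke cryptographic construction of~\cite{s2022} itself.

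Judged as a proof, the proposal has a genuine gap, and it sits exactly where all the mathematical content lies. Every step you actually carry out is routine and you say so yourself: Shor plus modular arithmetic for part~2, the comparison $(\mathrm{polylog}\,|I|)^{\alpha}|I|^{\beta}=o(|I|^{1/2})$ for fixed $\alpha$ and $\beta<1/2$ for part~1, and the sharpness observation at $\beta=1/2$. What is never supplied is the gadget family with properties (a)--(c): FC instances whose optimum is polylogarithmic in $|I|$, which always admit a secret-free coloring of size $\Theta(|I|^{1/2})$, and --- crucially --- whose \emph{only} valid colorings below the $|I|^{1/2}$ threshold are efficiently-decodable encodings of RSA preimages. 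Property~(c) is an extremely strong structural claim about formula coloring (indeed stronger than necessary; all part~1 needs is that a preimage can be \emph{extracted} in polynomial time from any sufficiently small valid coloring, not that no other small colorings exist), and nothing in the write-up indicates how to enforce it. Constructing instances with exactly this kind of decodability is the technical core of~\cite{s2022}; ``granting this construction'' is essentially granting the theorem. So the proposal is a faithful high-level reconstruction of the strategy behind Theorem~\ref{Pirnayatal}, but it is a blueprint with the load-bearing element missing, not a proof. If you want a complete argument of this general type at the cost of changing the target problem and weakening the conclusion to what the paper's Theorem~2 asserts, the paper's route is the economical one, precisely because the decoding and witness-transformation properties it needs are already supplied by the classical reductions it cites rather than by a gadget you must invent.
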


Similar theorems however immediately follow from the PCP theory without further ado. 
Let us take any NP optimization (NPO) problem $\Pi$. As we shall explain, the approximation advantage for quantum methods
(compared to classical)  to 
solve a {\em set of carefully chosen instances} of $\Pi$
is often exactly as large as the inapproximability gap for $\Pi$, i.e. the largest possible. 

Here is  the basis for this:
The PCP theorem as well as a plethora of related 
inapproximability results employ a sequence of Karp reductions \cite{k1972} from the 3SAT problem. Karp's reduction 
besides the instance transformation also features forward and backward witness transformations,
all in polynomial time.  (For readers who are new to these reductions 
we will do a case study of MAX-3SAT, but MAX-CUT, MIN-COLOR, MAX-CLIQUE, etc.\ go exactly in the same way.)

The classically hard instances are nothing but Karp-encoded images 
of Shor-factoring instances where the encoding entails intricate gap-enlarging techniques. These in turn are 
easily solved by a fault tolerant quantum computer by first finding out what the original factoring instance was (this is straightforward), and then
applying Shor's algorithm to it. The encoded instances have no practical value in addition to what their originals have, 
and they are unlikely to occur in nature or elsewhere in any other ways
than as a result of the above encoding process.
(We would love to solve them with a {\em generic noisy algorithm} without going back to the original, but this is not what anyone knows how to do.)

The NP=PCP theorem has
famously put exact and approximate problems on equal footing. With this in mind, claims of solving select hard instances
of an approximation problem with quantum is no more surprising than:
{\it ``A simple quantum advantage for exactly
solving combinatorial optimization problems may be obtained by reducing the integer factoring
problem to 3SAT and leveraging the advantage of Shor's algorithm''} \cite{s2022}. 

We have chosen the case of MAX-3SAT to demonstrate how one easily gets something like Theorem \ref{Pirnayatal}
from known Karp reductions:
\begin{thm}
Assume (as customary), that the factoring problem is not in $P$. 
Then there is subset $S$ of {\rm MAX-3SAT} instances such that:
\begin{enumerate}
\item There is no polynomial time classical algorithm that for all instances $I\in S$ finds an assignment $\vec{a}$ that satisfies more than
$7/8+0.01$ fraction of the clauses of $I$.
\item There is a quantum algorithm that for all instances  $I\in S$ can find an assignment that satisfies at least 99\% of all clauses.
\end{enumerate}
\end{thm}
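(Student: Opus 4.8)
The plan is to realize $S$ as the image of hard factoring instances under the very gap-producing Karp reduction that underlies Håstad's sharp $7/8$ inapproximability of MAX-3SAT, and then to exploit that this reduction transports witnesses in both directions. First I would fix, for each $N$ that is a product of two distinct primes, the standard Cook--Levin encoding of the relation ``$x\cdot y=N$ with $1<x<N$'' into a 3CNF formula $\phi_N$. By construction $\phi_N$ is satisfiable, every satisfying assignment literally spells out a nontrivial factor of $N$, and both the map $N\mapsto\phi_N$ and the read-off of a factor from a satisfying assignment run in polynomial time.

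Next I would compose this with the gap-amplifying reduction supplied by the PCP theorem in Håstad's form, instantiated with $\epsilon=0.01$ and with \emph{perfect completeness}. This produces a 3CNF $\psi_N$ together with three polynomial-time maps: the instance map $\phi_N\mapsto\psi_N$; a forward witness map sending any satisfying assignment of $\phi_N$ to an assignment satisfying \emph{all} clauses of $\psi_N$; and a backward witness map sending any assignment that satisfies more than a $7/8+0.01$ fraction of the clauses of $\psi_N$ to a satisfying assignment of $\phi_N$. I then set $S=\{\psi_N : N=pq,\ p\neq q\text{ prime}\}$, so that every instance in $S$ is fully satisfiable.

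With this in hand, part 2 is immediate: given $\psi_N\in S$, a fault-tolerant quantum machine reads $N$ off the transparent encoding, runs Shor's algorithm to obtain a factor, reconstructs a satisfying assignment of $\phi_N$, and pushes it through the forward witness map to an assignment satisfying $100\%$ of the clauses of $\psi_N$, which in particular exceeds $99\%$. Part 1 is the contrapositive of the hardness of factoring: a classical polynomial-time algorithm that, for every $\psi_N\in S$, returned an assignment beating the $7/8+0.01$ threshold could be fed into the backward witness map to recover a satisfying assignment of $\phi_N$, hence a factor of $N$, placing factoring in $P$ and contradicting the hypothesis. The two thresholds are consistent precisely because $7/8+0.01<0.99$, so a genuine quantum--classical separation remains.

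The hard part will be the backward witness map, i.e.\ making the soundness direction \emph{constructive}. A bare decision PCP only guarantees that \emph{no} assignment beats $7/8+\epsilon$ when $\phi_N$ is unsatisfiable; turning ``some assignment beats the threshold'' into an explicitly decodable witness for $\phi_N$ requires the gap reduction to be witness-preserving --- an approximation-preserving (AP- or L-) reduction --- rather than a mere gap reduction. Håstad's long-code Fourier soundness analysis is in fact a (randomized, list-producing) decoding argument, so the ingredient exists; the work is to check that the full chain from $\phi_N$ through Label Cover to $\psi_N$ carries the decoder end to end and keeps completeness exactly $1$, since the latter is what underwrites the $99\%$ quantum guarantee.
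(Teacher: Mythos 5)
Your quantum half (item 2) matches the paper's: recover $N$ from the transparent encoding, run Shor, and push the factor forward through the two witness maps. The genuine gap is in item 1. Because your $S$ contains only images of satisfiable formulas ($N=pq$), the decision gap of the reduction gives you nothing on $S$: every instance in $S$ is fully satisfiable, so there is no soundness side to distinguish against, and classical hardness can only come from the backward decoder you postulate --- a polynomial-time map turning any assignment that satisfies more than a $7/8+0.01$ fraction of the clauses of $\psi_N$ into a satisfying assignment of $\phi_N$, hence into a factor of $N$. That is exactly the step you defer as ``the hard part,'' and it cannot simply be cited: H{\aa}stad's theorem \cite{h2001}, as stated and as standardly used, is a gap/decision statement. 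The long-code Fourier analysis can indeed be read as a randomized list-decoder, but the chain also passes through parallel repetition (Label Cover), whose soundness proof extracts a single-game strategy from a repeated-game strategy via conditioning arguments that are not obviously implementable in polynomial time; making the entire pipeline witness-decoding end to end is a substantial project (in the spirit of decodable PCPs), not a detail to be checked. As written, your item 1 is therefore unproven --- and it is precisely the kind of extra engineering that \cite{s2022} had to build into their construction, and that the paper's argument is designed to avoid.

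The paper sidesteps the decoder entirely by a different choice of $S$: it takes the images $\beta(\alpha(N))$ of \emph{all} instances $N$ of the decision problem ``does $n$ have a nontrivial factor at most $k$,'' negative ones included. Hardness then becomes a distinguishing argument that needs only the forward witness maps plus the gap: a classical algorithm beating $7/8+0.01$ on every image of a positive $N$ would decide factoring, because one can run it on $\beta(\alpha(N))$ and \emph{count} the satisfied clauses, and images of negative $N$ are at most $(7/8+0.001)$-satisfiable, so no assignment for them can ever pass the count. The output assignment certifies itself; nothing has to be decoded back to $\phi_N$. The price is that on negative images the ``99\%'' of item 2 must be read as an approximation ratio (a random assignment achieves $7/8$ against an optimum of at most $7/8+0.001$), whereas your $S$ makes item 2 literal; that cosmetic cost is what makes item 1 provable from results one can actually cite. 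A smaller issue: restricting to $N=pq$ silently replaces the stated hypothesis ``factoring is not in $P$'' by hardness of splitting semiprimes --- for semiprimes the decision problem is trivial, so search hardness no longer follows from decision hardness --- while the paper's formulation with instances $(n,k)$ avoids this.
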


\begin{proof}
Consider the decision version of the factoring problem with input $(n,k)$: 
\[
(n,k) \rightarrow
\left\{
\begin{array}{ll}
1 & \mbox{if $n$ has a non-trivial factor at most $k$} \\
0 &  {\rm otherwise}  \\
\end{array}
\right.
\]
A certificate for the instance $(n,k)$ is an $f\in\mathbb{N}$ such that $f|n$ and $2\le f\le k$.
This problem is in NP while the 3SAT is NP hard. Thus there is a map $\alpha$ from the factoring problem to the 3SAT problem and 
in addition, a 
certificate transformation $\alpha'$ from certificates of the factoring problem to Boolean assignments for the 3SAT problem, 
such that 
\begin{enumerate}
\item For any satisfiable instance $N=(n,k)$ of the factoring problem and any certificate $f$ of $N$ it holds that
$\alpha(N)[\alpha'(N,f)] = 1$. (In simple words: assignment $\alpha'(N,f)$ satisfies 3SAT instance $\alpha(N)$.)
\item $\alpha$ is poly time computable, and in particular it increases instance sizes only polynomially.
\item For an instance $J$ of 3SAT we can in polynomial time tell if there is an
instance $N$ of the factoring problem such that $\alpha(N) = J$, and if there is such, it computes it.
\item For an instance $N$ of the factoring problem the 3SAT instance $J = \alpha(N)$ is satisfiable if
and only if $N$ is a positive instance.
\end{enumerate}

Further (analogously to $\alpha$ and $\alpha'$, and also in Karp-fashion), due to J. H{\aa}stad \cite{h2001}
we have a ``gap enlarging'' map $\beta$ that maps 3SAT instances to MAX-3SAT instances 
(called {\it instance transformation}), and a transformation $\beta'$ from Boolean assignments (denoted by $\vec{a}$) to Boolean assignments 
(called {\it witness transformation}; the length of $\beta'(J,\vec{a})$ is typically larger than the length of $\vec{a}$) such that:
\begin{enumerate}
\item For any satisfiable instance $J$ of the 3SAT problem and any satisfying assignment $\vec{a}$ of $J$ it holds
that the assignment $\beta'(J,\vec{a})$ satisfies at least 99\% of the clauses of $\alpha(J)$.
\item $\beta$ is poly time computable, and in particular it increases instance sizes only polynomially.
\item For an instance $I$ of MAX-3SAT we can tell in polynomial time if there is an
instance $J$ of 3SAT such that $\beta(J) = I$, and if there is such, it computes it.
\item If $J$ is a negative 3SAT instance (i.e, unsatisfiable), then at most $7/8+0.001$ fraction of the clauses of $I = \beta(J)$ are satisfiable.
\end{enumerate}
\begin{dfn}[The definition of $S$]
We define $S$ required by the theorem as 
\[
S = \{I\in \text{MAX-3SAT}\mid \exists N:\; I = \beta\circ \alpha (N)\}
\]
\end{dfn}
Assume the opposite of item 1 in the theorem. Then we could tell apart
$\ge 99$\% satisfiable instances of $S$ from $\le 7/8+0.001$ fraction satisfiable instances in polynomial time.
Since by applying 
$\beta\circ\alpha$ on any factoring instance $N$ we get an instance in $S$,
which by the properties of $\alpha$ and $\beta$ belongs to the 99\% satisfiable category
if and only if $N$ is a positive instance, by the above 
hypothetised approximation procedure we would be able to tell in polynomial time
if $N$ is a positive instance or not. This stands in contradiction to the assumption of the theorem that
factoring is not in P.

Item 2 of the theorem follows from
the following algorithm to solve any $I\in S$ with a fault tolerant quantum device:
\begin{enumerate}
\item Given $I\in S$ find instance $N$ of the factoring problem with the property that $I = \beta\circ \alpha (N) = \beta(\alpha (N))$. (This is easy.)
\item Using Shor's algorithm find a witness $f$ if $N$ is a positive instance. Note: Shor finds such a witness if exists. For $N=(n,k)$ it just compares 
the smallest factor of $n$ with $k$.
\item Output a random assignment for $I$ if $N$ was a negative instance.
\item If $N$ is a positive instance, output $\beta'(\alpha(N),\alpha'(N,f))$. 
\end{enumerate}
Both for the negative and positive instances
99\% quantum approximation is achieved {\em on $S$}.\end{proof}

\begin{rem}
Although the above example only establishes a factor of $\approx 8/7$ quantum approximation advantage, for the MAX-3SAT problem this is optimal, due to a 
trivial classical algorithm that matches this bound. In a similar fashion, 
{\em super-polynomial} quantum approximation advantage can be achieved for the MAX-CLIQUE and MIN-COLORING problems, as there is a sequence of 
Karp reductions leads to the corresponding near-optimal gap-problems 
(GAP-MAX-CLIQUE$_{n^{1-\epsilon},n^{\epsilon}}$, GAP-MIN-COLOR$_{n^{1-\epsilon},n^{\epsilon}}$) \cite{h1999}
from the decision version of Factoring. {\em However}, with approximation problems that are only unique game (UG) hard \cite{k2002}
the situation is different, as to the knowledge of the author there is no known reduction from the factoring to UGP. MAX-CUT 
falls into this category. To show the possibly optimal $\approx 1/0.878$ quantum approximation advantage for the MAX-CUT seems out of reach,
although smaller constant factor advantage is provable.
\end{rem}

Given our poor understanding of the above ``quantum advantage'' for
NPO problems whose inapproximability is derived from UG rather than 
NP complete problems, it would be interesting to solve the following problem 
(a negative result or putting UG into BQP would be also interesting):

\medskip

{\noindent \bf Problem:}  Give a Karp reduction from the decision version of the factoring problem to UG.

\section{Bibliography}

\end{document}